\newcommand{\ubar}[1]{\underaccent{\bar}{#1}}
\title{\LARGE \bf Efficient Containment of Exact SIR Markovian Processes on Networks}
\author{Masaki Ogura and Victor M.~Preciado
\thanks{The authors are with the Department of Electrical and Systems
Engineering, University of Pennsylvania, Philadelphia, PA 19014, USA.
Email:  {\tt\small \{ogura,preciado\}@seas.upenn.edu}}%
\thanks{This work was supported in part by the NSF under grants CNS-1302222 and IIS-1447470.}%
}
\newtheorem{definition}{Definition}[section]
\newtheorem{assumption}[definition]{Assumption}
\newtheorem{lemma}[definition]{Lemma}
\newtheorem{proposition}[definition]{Proposition}
\newtheorem{theorem}[definition]{Theorem}
\newtheorem{remark}[definition]{Remark}
\newtheorem{problem}[definition]{Problem}
\DeclareMathOperator*{\minimize}{minimize}
\DeclareMathOperator*{\subjectto}{subject\ to}
\DeclareMathOperator{\Id}{Id}
\newcommand{\smallbmatrix}[1]{{\begin{bsmallmatrix}#1\end{bsmallmatrix}}}
\DeclareSymbolFont{bbold}{U}{bbold}{m}{n}
\DeclareSymbolFontAlphabet{\mathbbold}{bbold}
\newcommand{\onev}{\mathbbold{1}}
\newcommand{\onevtop}{\onev^{\!\!\top}\!}
\DeclareMathAlphabet{\pazocal}{OMS}{zplm}{m}{n}
\renewcommand{\mathcal}[1]{\pazocal{#1}}
\newcommand{\afterequation}{\vskip 3pt}
\newcommand{\mywidth}{.99\linewidth}
\begin{document}

\maketitle

\begin{abstract}
This paper introduces a theoretical framework for the
analysis and control of the stochastic susceptible-infected-removed
(SIR) spreading process over a network of heterogeneous agents.
In our analysis, we analyze the exact networked Markov process describing the SIR model, without resorting to mean-field approximations, and introduce a convex optimization framework to find an efficient allocation of resources to contain the expected number of accumulated infections over time.
Numerical simulations are presented to illustrate the
effectiveness of the obtained results.
\end{abstract}

\section{Introduction}

The analysis of contagion processes in complex networks is one of the
central problems in network science and engineering, with applications
in a wide range of scenarios, such as epidemiology~\cite{Nowzari2016},
public health~\cite{Tizzoni2012}, and cyber-physical
systems~\cite{Kim2012a}. During the last decade, we have witnessed a
tremendous advance in this problem, including the relationship between
epidemic thresholds and network eigenvalues~\cite{VanMieghem2009a},
the connection between curing policies and the cut-width of the graph
\cite{Drakopoulos2016a}, the use of optimization tools to contain
epidemic outbreaks \cite{Wan2008,Preciado2013,Preciado2014}, as well
as new modeling frameworks for analysis of spreading processes over
multilayer~\cite{DarabiSahneh2013,Chen2014}, time-varying
\cite{Ogura2016}, and adaptive networks~\cite{Guo2013,Ogura2015d}.

Designing strategies to contain epidemic outbreaks in networks is of
great relevance in public health. In this context, the following
question is of particular interest: given a contact network and
resources that provide partial protection, how should one distribute
these resources throughout the networks in a cost-optimal manner to
contain the spread? This question has been addressed in several papers
by the control community (see \cite{Nowzari2016} and references
therein). Most existing results are based on the analysis of the
susceptible-infected-susceptible (SIS) spreading model, in which nodes
in the network can only be in two states: infected or healthy.
However, in many practical settings, nodes can also be immune to the
disease due to, for example, a previous exposition to the infectant.
The addition of this third state has a nontrivial effect on the
dynamics of the spread, which is commonly modeled using the
susceptible-infected-removed (SIR) epidemic model \cite{Nowzari2016},
in which a node recovers from the infection with acquired immunity.

Although we find a variety of studies on the SIR model
\cite{Moreno2002a,Barthelemy2005,Sharkey2008,Sharkey2011,Khouzani2011a,Eshghi2016}, most of them are based on a {mathematical technique called} mean-field approximation, in which one assumes independence of (potentially) dependent random variables. {Although this approximation significantly simplifies the analysis of the model (as seen in the references above), the approximation is also known to introduce mathematical terms irrelevant to the original model and, therefore, can result in a large approximation error~\cite{Sharkey2011}.} The main goal of this paper is to introduce a theoretical framework for the analysis of the \emph{exact} networked Markov process describing the SIR model, without resorting to mean-field approximations, and introduce an optimization framework to find an efficient{, yet sub-optimal,} allocation of resources to contain the expected number of accumulated infections over time. Our framework extends to a generalized SIR model where an infected node can be isolated by authorities (e.g., quarantine) for the purpose of suppressing an epidemic outbreak. In this case, we present an alternative optimization framework to distribute a finite amount of resources to suppress an epidemic outbreak.

This paper is organized as follows. After introducing mathematical
preliminaries in Section~\ref{sec:SIR}, we describe the networked SIR
model (with and without isolated nodes), and state the resource
allocation problems analyzed in this paper. In Sections~\ref{sec:med}
and \ref{sec:isol}, we introduce a convex optimization framework to
provide solutions for these resource allocation problems. We
illustrate the effectiveness of our results via numerical simulations
in Section~\ref{sec:num}.

\subsection{Mathematical Preliminaries}

An undirected graph is a pair~$\mathcal G = (\mathcal V, \mathcal E)$, where
$\mathcal V = \{1, \dotsc, n\}$ is the set of nodes, and $\mathcal E$ is the set
of edges, consisting of distinct and unordered pairs~$\{i, j\}$ for $i, j\in
\mathcal V$. We say that $i$ and $j$ are adjacent if $\{i, j\} \in \mathcal E$.
The adjacency matrix~$A\in \mathbb{R}^{n\times n}$ of~$\mathcal G$ is defined as
the $\{0, 1\}$\nobreakdash-matrix whose $(i,j)$ entry is one if and only if $i$
and $j$ are adjacent.

For a positive integer $n$, define $[n] = \{1, \dotsc, n\}$. We let $\Id_n$
denote the identity matrix with dimension $n$ and $O_{n,m}$ denote the $n\times
m$ zero matrix. Let $u_i$ denote the $i$th canonical basis vector in $\mathbb{R}^p$ and
define $U_{ij} = u_i u_j^\top$. By $\onev_p$ we denote the $p$-vector whose
entries are all one. A real matrix $A$, or a vector as its special case, is said
to be nonnegative (positive), denoted by $A\geq 0$ ($A>0$, respectively), if $A$
is nonnegative (positive, respectively) entry-wise. We write $A\geq B$ if
$A-B\geq 0$. The notations $A>B$, $A\leq B$, and $A<B$ are then defined in the
obvious way. We denote the Kronecker product of matrices~$A$ and $B$ by
$A\otimes B$. We say that a square matrix is Hurwitz stable if all the
eigenvalues of the matrix have negative real parts. Also, we say that a square
matrix is Metzler if its off-diagonal entries are all non-negative.

For the proof of the main results of this paper, we need the following 
lemma {that equivalently reduces vector inequalities involving the inverse of a Metzler matrix to a pair of linear vector inequalities:}

\begin{lemma}[{\cite[Lemma~1]{Briat2012c}}]\label{lem:fundamental}
For a real number $\lambda$, a Metzler matrix $F \in \mathbb{R}^{n\times n}$ and nonnegative matrices $G
\in \mathbb{R}^{n\times s}$ and $H\in\mathbb{R}^{r\times n}$, the following
statements are equivalent:
\begin{itemize}
\item $F$ is Hurwitz stable and  $-\onev^{\!\top}_{r}HF^{-1}G
 < \lambda \onev_s^\top$;

\item there exists a positive vector $v \in \mathbb{R}^n$ satisfying the
inequalities $v^{\!\top} F + \onev_r^{\!\top} H < 0$ and $v^{\!\top} G < 
\lambda \onev_s^{\!\top}$.
\end{itemize}
\afterequation
\end{lemma}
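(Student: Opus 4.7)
The plan is to prove the two implications separately, leaning on two classical facts about Metzler matrices: (i) a Metzler matrix $F$ is Hurwitz stable if and only if there exists a positive vector $v$ with $v^\top F < 0$ entrywise; and (ii) if $F$ is Metzler and Hurwitz, then $-F^{-1}\geq 0$ (obtained by writing $-F^{-1}=(\alpha I - (F+\alpha I))^{-1}$ and expanding as a Neumann series for $\alpha$ large enough so that $F+\alpha I \geq 0$).

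For the direction $(2)\Rightarrow(1)$, I would first note that $\onev_r^\top H\geq 0$ together with $v^\top F + \onev_r^\top H < 0$ yields $v^\top F < 0$; by fact~(i) this already forces $F$ to be Hurwitz stable, and hence $-F^{-1}\geq 0$ by fact~(ii). Post-multiplying the inequality $v^\top F + \onev_r^\top H < 0$ on the right by the nonnegative matrix $-F^{-1}$ gives $-v^\top - \onev_r^\top H F^{-1} \leq 0$, i.e.\ $-\onev_r^\top H F^{-1}\leq v^\top$. Post-multiplying once more by $G\geq 0$ and chaining with $v^\top G < \lambda\onev_s^\top$ delivers the required bound $-\onev_r^\top H F^{-1} G < \lambda\onev_s^\top$.

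For the converse $(1)\Rightarrow(2)$, I would propose the candidate $v^\top := -\onev_r^\top H F^{-1} + \epsilon\, e^\top$, where $\epsilon>0$ is a small scalar to be chosen and $e>0$ is any vector satisfying $e^\top F < 0$, the existence of which is guaranteed by fact~(i) applied to the Hurwitz matrix $F$. Positivity of $v$ is immediate: $-\onev_r^\top H F^{-1}\geq 0$ because $H\geq 0$ and $-F^{-1}\geq 0$, while $\epsilon\, e^\top>0$. A direct substitution gives $v^\top F + \onev_r^\top H = \epsilon\, e^\top F < 0$, handling the first inequality. For the second, $v^\top G = -\onev_r^\top H F^{-1} G + \epsilon\, e^\top G$; since the hypothesis $-\onev_r^\top H F^{-1} G < \lambda\onev_s^\top$ is strict, choosing $\epsilon>0$ sufficiently small preserves the strictness.

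The main obstacle is the reverse direction, where the natural candidate $v^\top = -\onev_r^\top H F^{-1}$ is only guaranteed to be nonnegative and only verifies the two inequalities weakly; the trick is to identify the additive perturbation $\epsilon\, e^\top$ that simultaneously enforces strict positivity of $v$ and preserves both strict inequalities. Once this perturbation is in hand, the remaining verification reduces to routine algebra.
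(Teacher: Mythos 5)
Your proof is correct. Note, however, that the paper does not prove this lemma at all: it is imported verbatim by citation from Briat's work on $L_1$/$L_\infty$-gain characterizations of positive systems, so there is no in-paper argument to compare against. Your self-contained proof is the standard one and all the delicate points are handled properly: in the direction from the vector condition to the gain condition, you correctly extract $v^\top F<0$ (hence Hurwitz stability via the linear copositive Lyapunov characterization of Metzler matrices) before post-multiplying by $-F^{-1}\geq 0$, and you correctly track that multiplication by a nonnegative matrix only preserves weak inequalities, recovering strictness at the end by chaining with $v^\top G<\lambda\onev_s^\top$. In the converse direction, the perturbation $v^\top=-\onev_r^\top HF^{-1}+\epsilon\,e^\top$ with $e^\top F<0$ is exactly the right device: it makes $v$ strictly positive, collapses the first inequality to $\epsilon\,e^\top F<0$ exactly, and preserves the strict second inequality for $\epsilon$ small. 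The only implicit point worth making explicit is why a \emph{strictly} positive $e$ with $e^\top F<0$ exists (e.g.\ take $e^\top=w^\top(-F^{-1})$ for any $w>0$, using that $-F^{-1}$ is nonnegative and nonsingular, hence has no zero columns), but this is standard.
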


Finally, we recall basic facts about a class of optimization problems called geometric programs~\cite{Boyd2007}. Let $x_1$, $\dotsc$, $x_m$ denote $m$ real positive variables. We say that a real-valued function $f$ of~$x = (x_1, \dotsc, x_m)$ is a {\it monomial function} if there exist $c>0$ and $a_1, \dotsc, a_m \in \mathbb{R}$ such that $f(x) = c {\mathstrut x}_1^{a_1} \dotsm {\mathstrut x}_m^{a_m}$. Also, we say that $f$ is a {\it posynomial function} if it is a sum of monomial functions of~$x$. Given posynomial functions $f_0$, $\dotsc$, $f_p$ and monomial functions $g_1$, $\dotsc$, $g_q$, the optimization problem
\begin{equation}\label{eq:generalGP}
\begin{aligned}
\minimize_x\ 
&
f_0(x)
\\
\subjectto\ 
&
f_i(x)\leq 1,\quad i=1, \dotsc, p, 
\\
&
g_j(x) = 1,\quad j=1, \dotsc, q, 
\end{aligned}
\end{equation}
is called a {\it geometric program}. It is known~\cite{Boyd2007} that a
geometric program can be converted into a convex optimization problem. We call
the constraints in \eqref{eq:generalGP} as {\it posynomial constraints}.

\section{SIR Model over Complex Networks} \label{sec:SIR}

In this section, we first give a brief overview of the networked SIR
(susceptible-infected-removed) model (see, e.g.,~\cite{Nowzari2016}).
We also present an extended version of the networked SIR model, where the
isolation of infected nodes is taken into account. We also state two resource
allocation problems under study.

\begin{figure}[tb]
\centering
\vspace{.2cm}
\includegraphics[width=.5\linewidth]{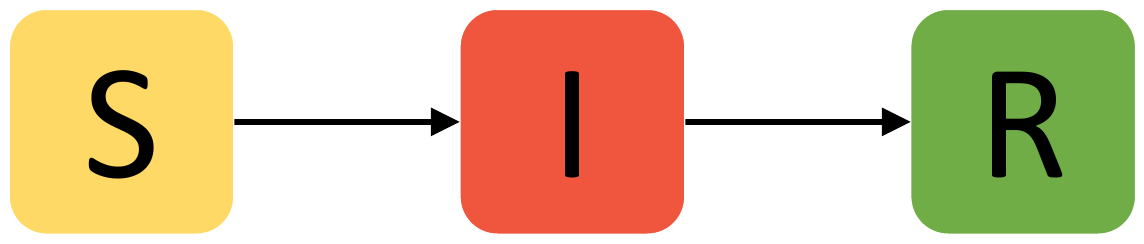}
\put(-90,18){$\beta_i$}
\put(-41,18){$\delta_i$}
\caption{SIR model with infection rates $\beta_i$ and recovery rates $\delta_i$.}
\label{fig:SIR}
\end{figure}

\subsection{SIR model}

Let $\mathcal G = (\mathcal V, \mathcal E)$ be an undirected graph of $n$ nodes
whose adjacency matrix equals $A$. In the networked SIR model, each node in the
graph can be in one out of three states: \emph{susceptible}, \emph{infected}, or
\emph{removed}. For convenience of notation, we represent the state of node~$i$
at time $t$ by the $\{0, 1\}$\nobreakdash-variables $S_i(t)$, $I_i(t)$, and $R_i(t)$ that
take value one if and only if node~$i$ is susceptible, infected, or removed,
respectively. Then, the dynamics of the state of the nodes is described by the
following transition probabilities
\begin{align}
\Pr(I_i(t+h) = 1 \mid S_i(t) = 1) &=
\beta_i\sum_{j=1}^n a_{ij}I_j(t) h + o(h),
\label{eq:infection}
\\
\Pr(R_i(t+h) = 1 \mid I_i(t) = 1) &= \delta_i \,h + o(h), 
\label{eq:recovery}
\end{align}
where $t \geq 0$ and $h > 0$ are arbitrary. The constants~$\beta_i > 0$ and
$\delta_i > 0$ are called the \emph{infection} and \emph{recovery} rate of
node~$i$, respectively (see Fig.~\ref{fig:SIR} for a schematic picture). The
probability~\eqref{eq:infection} indicates that a node~$i$ receives an infection
from each of its infected neighbors with the instantaneous rate of~$\beta_i$.
Also, from the latter probability~\eqref{eq:recovery}, the time it takes for
node~$i$ to recover from an infection event follows the exponential distribution
of mean~$1/\delta_i$. We assume that a node is either susceptible or infected at
time~$t=0$. We remark that, unlike in the SIS model widely studied in the
literature, we do not allow a transition from the infected or removed states to
the susceptible state, modeling the acquisition of immunity by nodes to
spreading process.

Let $\sigma_S(t)$,  $\sigma_I(t)$, and  $\sigma_R(t)$ denote the number of
susceptible, infected, and removed nodes at time $t$, respectively. In this
paper, we measure the prevalence of the spreading process by the quantity
\begin{equation}\label{def:lambda}
\lambda = \lim_{t\to\infty} E[\sigma_R(t)] - \sigma_I(0), 
\end{equation}
which equals the expected number of infections occurring after time $t=0$
because an infected node will be eventually removed with probability one.

We assume that, for each node $i$, we can distribute a preventative resource to
alter the value of~$\beta_i$ within a given interval~$[\ubar{\beta}_i,
\bar{\beta}_i] \subset (0, \infty)$ by paying a cost~$f_i(\beta_i)$. Similarly
we assume that we can distribute a corrective resource for tuning $\delta_i$
within another given interval~$[\ubar{\delta}_i, \bar{\delta}_i] \subset (0,
\infty)$ with an associated cost $g_i(\delta_i)$. Therefore, the total cost for achieving
the specific infection rates $\{\beta_1, \dotsc, \beta_n\}$ and the recovery
rates~$\{\delta_1, \dotsc, \delta_n\}$ equals $\sum_{i=1}^n (f_i(\delta_i) +
g_i(\beta_i))$. It is also assumed that we know which nodes in the graph are
infected at the initial time~$t = 0$.

\begin{figure}[tb]
\centering
\vspace{.2cm}
\includegraphics[width=.5\linewidth]{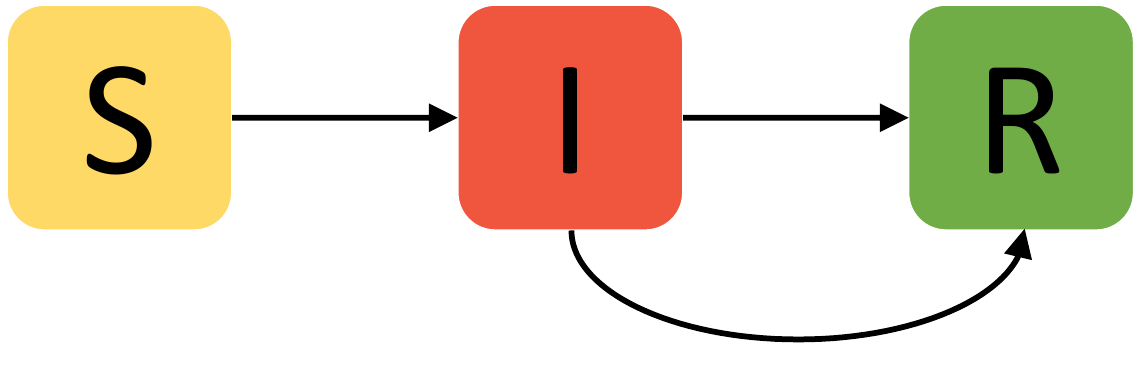}
\put(-90,31){$\beta_i$}
\put(-41,31){$\delta_i$}
\put(-41,6){$Y_i$}
\caption{SIR model with infection rates $\beta_i$, recovery rates $\delta_i$, and recovery times $Y_i$.}
\label{fig:SIR_isolation}
\end{figure}

Now we can state the first problem studied in this paper:

\begin{problem}[Resource allocation]\label{prb:medical}
Given a set of initially infected nodes, an available budget $\bar C > 0$, and a
desired control level $\bar\lambda > 0$, find $\beta_i \in [\ubar{\beta}_i,
\bar{\beta}_i]$ and $\delta_i \in [\ubar{\delta}_i, \bar{\delta}_i]$ ($i\in[n]$)
such that $\lambda \leq \bar\lambda$ and
\begin{equation}\label{eq:cost1<Gamma}
\sum_{i=1}^n (f_i(\delta_i) + g_i(\beta_i))\leq \bar C.
\end{equation}
\afterequation
\end{problem}

{We remark that, although the authors in
\cite{Preciado2014} give convex optimization-based solutions to the
optimal resource allocation problems for the SIS model, the allocation
optimal for the SIS model does not necessarily solve Problem~\ref{prb:medical}
for the SIR model, as we confirm in Section~\ref{sec:num}. Moreover,
since the infection-free equilibrium (i.e., the equilibrium where all
nodes are susceptible) is not the only equilibrium for the SIR model,
we cannot rely on the framework proposed in \cite{Preciado2014}.}

\subsection{SIR Model with Isolation}

The above problem statement is not practical when there are no medical resource able to tune  the recovery rate of infected patiences. An alternative action in this situation is to use social distancing, such as quarantine~\cite{Ki2015}. A typical question in this context is the following: \emph{How fast should we isolate infected nodes to effectively prevent an epidemic outbreak?}

In what follows, we introduce an extended SIR model that incorporates the isolation of
infected nodes. We assume that, once a node~$i$ becomes infected, the node is removed (quarantined) according to an stochastic process, which is independent of the
natural recovery process (as depicted in Fig.~\ref{fig:SIR_isolation}). Let $X_i$
denote the random variable following the exponential distribution with
mean~$1/\delta$ and $Y_i$ the time it takes to remove the node. Then, the
overall length of time from the infection of $i$ to its removal (either by
natural recovery or manual removal) is given by
\begin{equation}\label{eq:YwithIsolation}
Z_i = \min\left(X_i, Y_i\right).
\end{equation}

We assume that the distribution of the random variable~$Y_i$ is parametrized by
positive variables~$\gamma_i = (\gamma_{i1}, \dotsc, \gamma_{iq_i})$, and that
we can tune the distribution by paying a cost~$h_i(\gamma_i)$. We also assume
that, as in Problem~\ref{prb:medical}, we can tune the infection rates~$\beta_i$
with a cost~$g_i(\beta_i)$. We can then formulate the second problem studied
in this paper:

\begin{problem}[Resource allocation with isolation]\label{prb:isolation}
Given a set of initially infected nodes, an available budget $\bar C>0$, and a
desired control level $\bar\lambda > 0$, find $\beta_i \in [\ubar{\beta}_i,
\bar{\beta}_i]$ and $\gamma_i \in \prod_{k=1}^{q_i} [\ubar{\gamma}_{ik},
\bar{\gamma}_{ik}]$  ($i=1$, $\dotsc$, $n$) such that $\lambda \leq \bar\lambda$
and
\begin{equation}\label{eq:cost:isol}
\sum_{i=1}^n (g_i(\beta_i)+h_i(\gamma_i)) \leq \bar C.
\end{equation}
\end{problem}

\section{Resource Allocation without Isolation}\label{sec:med}

This section gives a solution to Problem~\ref{prb:medical}. We specifically show
that Problem~\ref{prb:medical} is feasible if a set of posynomial constraints is
feasible. For this purpose, we present an $n$\nobreakdash-states linear model that
upper-bounds the averaged dynamics of the \emph{exact} SIR model. We notice
that, since the SIR model is a Markov process having $3^n$ states, its direct
analysis is computationally hard, if not possible, even when the size~$n$ of the
graph is small.

We start with describing the SIR model by stochastic differential equations. Let
us denote by $N_\mu$ a Poisson counter of rate~$\mu$. We assume that all the
Poisson counters appearing in the paper are stochastically independent. Then,
from \eqref{eq:infection} and~\eqref{eq:recovery}, the evolution of the nodal
states can be exactly described by the stochastic differential equations:
\begin{equation} \label{eq:123.3hoge}
\begin{aligned}
dS_i &= -S_i\sum_{j=1}^n a_{ij} I_j \,dN_{\beta_i}, 
\\
dI_i &= -I_i \,dN_{\delta_i} + S_i \sum_{j=1}^n a_{ij} I_j \,dN_{\beta_i},
\\
dR_i &= I_i \,dN_{\delta_i}, 
\end{aligned}
\end{equation}
for $i=1$, $\dotsc$, $n$. Using these representations, we can
prove the following proposition: 

\begin{proposition}\label{prop:med}
Define the $n\times n$ diagonal matrices $J$, $B$, and $D$ by 
\begin{equation*}
J_{ii} = S_i(0),\ B_{ii} = \beta_i,\ D_{ii} = \delta_i
\end{equation*}
for each $i\in [n]$. {Let $\lambda$ be given by \eqref{def:lambda}
and} let $\bar\lambda > 0$ be an arbitrary vector. Then, we have
$\lambda < \bar\lambda$ if there exists a positive $v \in
\mathbb{R}^n$ satisfying the following inequalities:
\begin{subequations}\label{eq:linProg1}
\begin{gather}
v^\top J BA + \onev_n^\top D < v^\top D,\label{eq:linProg1.1}
\\
v^\top I(0) < \bar\lambda + \sigma_I(0). \label{eq:linProg1.2}
\end{gather}
\end{subequations}
\afterequation
\end{proposition}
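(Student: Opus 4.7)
The plan is to pass from the Poisson-driven SDEs \eqref{eq:123.3hoge} to a linear upper bound on the expected state, then invoke Lemma~\ref{lem:fundamental}. Taking expectations of \eqref{eq:123.3hoge} and using $E[X\,dN_\mu] = \mu E[X]\,dt$ for processes $X$ adapted to the natural filtration, I would obtain, with $p_i(t)=E[I_i(t)]$ and $r_i(t)=E[R_i(t)]$,
\begin{equation*}
\dot p_i = -\delta_i p_i + \beta_i \sum_{j=1}^n a_{ij} E[S_i(t) I_j(t)],\qquad \dot r_i = \delta_i p_i.
\end{equation*}
The bilinear term $E[S_i I_j]$ is not directly expressible in $p$, but in the SIR dynamics every sample path of $S_i$ is monotonically nonincreasing in $\{0,1\}$, so $S_i(t)\leq S_i(0)$ almost surely and hence $E[S_i(t) I_j(t)] \leq S_i(0)\,p_j(t) = J_{ii}\,p_j(t)$. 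Because $J$ and $B$ are diagonal (and therefore commute), this yields the componentwise differential inequality $\dot p \leq (JBA - D)\,p$.

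Since $JBA \geq 0$ and $D$ is diagonal, the matrix $F := JBA - D$ is Metzler, so the standard comparison principle for positive systems gives $p(t) \leq e^{Ft}\,p(0) = e^{Ft} I(0)$ componentwise, where I use the deterministic initial condition $p(0)=I(0)$. Integrating $\dot r = Dp$ from $0$ with $R_i(0)=0$ and summing over $i$, I would get
\begin{equation*}
\lim_{t\to\infty} E[\sigma_R(t)] = \int_0^\infty \onev_n^\top D\,p(\tau)\,d\tau \leq \int_0^\infty \onev_n^\top D\,e^{F\tau}\,d\tau\; I(0),
\end{equation*}
provided the last integral converges, which requires $F$ to be Hurwitz stable; in that case the integral evaluates to $-\onev_n^\top D\,F^{-1} I(0)$.

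To close the argument, I would apply Lemma~\ref{lem:fundamental} with the data $F = JBA - D$, $H = D$, $G = I(0)$, $r=n$, $s=1$, and the lemma's scalar set to $\bar\lambda + \sigma_I(0)$. Inequalities \eqref{eq:linProg1.1}--\eqref{eq:linProg1.2} are precisely the second bullet of the lemma for this data, so it simultaneously delivers Hurwitz stability of $F$ (legitimizing the integral representation above) and the scalar bound $-\onev_n^\top D\,(JBA-D)^{-1} I(0) < \bar\lambda + \sigma_I(0)$. Chaining this with the comparison estimate gives $\lim_{t\to\infty} E[\sigma_R(t)] < \bar\lambda + \sigma_I(0)$, i.e.\ $\lambda<\bar\lambda$. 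The main obstacle I expect is the first step: justifying the mean-field inequality $\dot p\leq F p$ from the jump SDEs, which rests on the almost-sure monotonicity $S_i(t)\leq S_i(0)$ and a careful application of the expectation rules for Poisson-driven SDEs; once that is in place, the Metzler comparison principle and Lemma~\ref{lem:fundamental} mesh cleanly to produce the stated conclusion.
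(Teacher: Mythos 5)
Your proposal is correct and follows essentially the same route as the paper's own proof: take expectations of the jump SDEs, bound the correlation term via the almost-sure monotonicity $S_i(t)\leq S_i(0)=J_{ii}$ to get $\dot p\leq (JBA-D)p$, integrate using the comparison principle, and invoke Lemma~\ref{lem:fundamental} with $F=JBA-D$, $H=D$, $G=I(0)$ to obtain both Hurwitz stability and the scalar bound. Your identification of the lemma's data ($r=n$, $s=1$, scalar $\bar\lambda+\sigma_I(0)$) matches the paper exactly.
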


\begin{proof}
From \eqref{eq:123.3hoge} it follows \cite{Sharkey2008} that\textcolor{white}{\eqref{eq:linProg1.2}}
\begin{equation}
\frac{d}{dt}E[I_i(t)] 
= 
-\delta_iE[I_i(t)] + \beta_i \sum_{j=1}^n a_{ij} E[I_j(t) S_i(t)], \label{eq:d/dtE[I_i(t)]}
\end{equation}
and $(d/dt)E[R_i(t)] = \delta_i E[I_i(t)]$. From the latter equation, the $\{0,
1\}^n$-valued stochastic processes $R = [R_1 \ \cdots \ R_n]^\top$ and $I = [I_1
\ \cdots \ I_n]^\top$ satisfy $(d/dt)E[R(t)] = D E[I(t)]$. Therefore,
\begin{equation}\label{eq:lim...}
\lim_{t\to\infty} E[R(t)] = D \int_0^\infty E[I(t)]\,dt. 
\end{equation}

Let us evaluate the integral~$\int_0^\infty E[I(t)]\,dt$. Since $S_i$ is
non-increasing as a function of $t$, we have $S_i(t) \leq S_i(0) = J_{ii}$.
Therefore, by \eqref{eq:d/dtE[I_i(t)]} we have 
\begin{equation*}
\frac{d}{dt}E[I_i(t)]  \leq
-\delta_iE[I_i(t)] + J_{ii} \beta_i \sum_{j=1}^n a_{ij} E[I_j(t)]
\end{equation*} 
for every $i$, Therefore, we obtain
\begin{equation*}
\frac{d}{dt}E[I(t)] \leq (JBA-D) E[I(t)]
\end{equation*}
and, hence,
\begin{equation}\label{eq:ddtI<...}
E[I(t)] \leq \exp\left(
(JBA-D)t
\right)I(0)
\end{equation}
by the comparison principle~\cite{Kirkilionis2004}. Since $JBA-D$ is Hurwitz
stable by \eqref{eq:linProg1.1} and Lemma~\ref{lem:fundamental}, we can integrate
\eqref{eq:ddtI<...} to obtain
\begin{equation*}
\begin{aligned}
\int_0^\infty E[I(t)]\,dt 
&\leq \int_0^\infty
\exp\left((JBA-D)t\right)I(0)\,dt 
\\
&= -(JBA-D)^{-1} I(0).
\end{aligned}
\end{equation*}
From this inequality and \eqref{eq:lim...}, we see that
\begin{equation*}
\begin{aligned}
\lambda 
&= 
\onev_{n}^\top \lim_{t\to\infty}
E[R(t)] - \sigma_I(0)
\\
&\leq -\onev_{n}^\top D (JBA-D)^{-1} I(0) -\sigma_I(0). 
\end{aligned}
\end{equation*}
Therefore, by Lemma~\ref{lem:fundamental}, we can conclude $\lambda \leq
\bar\lambda$ from the existence of a positive vector $v \in\mathbb{R}^n$
satisfying \eqref{eq:linProg1}. This completes the proof of the proposition.
\end{proof}

The inequalities in \eqref{eq:linProg1} allow us to achieve an efficient
resource distribution via checking the feasibility of posynomial constraints,
under the assumption that the cost functions~$f_i$ and~$g_i$ are posynomials for
every $i\in [n]$:

\begin{theorem}\label{thm:optimal}
Assume that $f_i$ and $g_i$ are posynomials for every $i\in [n]$. Then,
$\{\beta_i\}_{i=1}^n$ and $\{\delta_i\}_{i=1}^n$ solve Problem~\ref{prb:medical}
if there exists a positive $v\in\mathbb{R}^n$ satisfying the following
posynomial constraints:
\begin{equation}\label{eq:GP:med}
\text{\eqref{eq:cost1<Gamma}, 
\eqref{eq:linProg1}, 
$\ubar{\beta}_i \leq \beta_i \leq \bar{\beta}_i$, 
and $\ubar{\delta}_i \leq \delta_i \leq \bar{\delta}_i$.
}
\end{equation}
\afterequation
\end{theorem}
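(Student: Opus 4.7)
The statement to prove is essentially a repackaging of Proposition~\ref{prop:med} into the language of geometric programming, so the proof plan splits naturally into two pieces: first, invoke Proposition~\ref{prop:med} to confirm that the constraints listed in \eqref{eq:GP:med} do guarantee a solution to Problem~\ref{prb:medical}; second, verify that each of these constraints is in fact a posynomial constraint, so that checking their joint feasibility amounts to solving a geometric program (and hence, by the remark at the end of the preliminaries, a convex optimization problem).

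For the first piece, I would argue as follows. Assume $v > 0$ together with $\{\beta_i\}, \{\delta_i\}$ satisfy all constraints in \eqref{eq:GP:med}. Then \eqref{eq:linProg1.1} and \eqref{eq:linProg1.2} hold, so Proposition~\ref{prop:med} yields $\lambda < \bar\lambda$; the box constraints $\ubar{\beta}_i \le \beta_i \le \bar{\beta}_i$ and $\ubar{\delta}_i \le \delta_i \le \bar{\delta}_i$ place $\beta_i$ and $\delta_i$ in their admissible ranges; and \eqref{eq:cost1<Gamma} is included verbatim. Thus all requirements of Problem~\ref{prb:medical} are fulfilled.

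The second, and slightly less trivial, piece is checking the posynomial form of each constraint after dividing by an appropriate monomial to move it into the shape \emph{posynomial} $\le 1$. The cost inequality \eqref{eq:cost1<Gamma} becomes $\bar C^{-1} \sum_{i=1}^n (f_i(\delta_i) + g_i(\beta_i)) \le 1$, which is posynomial because $f_i$ and $g_i$ are by hypothesis. The box constraints become $\ubar{\beta}_i \beta_i^{-1} \le 1$, $\bar{\beta}_i^{-1}\beta_i \le 1$, and analogously for $\delta_i$; each side is a monomial. The scalar form of \eqref{eq:linProg1.1}, namely
\begin{equation*}
\sum_{j=1}^n v_j S_j(0) a_{ji} \beta_j + \delta_i < v_i \delta_i
\quad (i \in [n]),
\end{equation*}
divides through by $v_i \delta_i$ to give $\sum_{j} S_j(0) a_{ji} v_j v_i^{-1} \delta_i^{-1} \beta_j + v_i^{-1} < 1$, a posynomial in $(v, \beta, \delta)$ since all coefficients $S_j(0) a_{ji}$ are nonnegative (vanishing terms simply drop out). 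Similarly, \eqref{eq:linProg1.2} rewrites as $(\bar\lambda + \sigma_I(0))^{-1} \sum_{i=1}^n I_i(0) v_i < 1$, a posynomial in $v$.

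The only mild subtlety I anticipate is presentational: one must be careful that the constraints \eqref{eq:linProg1} are strict inequalities while the standard form of a geometric program uses non-strict ones. I would simply remark that strict feasibility of the rewritten constraints is equivalent to non-strict feasibility after an arbitrarily small perturbation of $\bar\lambda$, or alternatively note that the proof of Proposition~\ref{prop:med} actually goes through with $\le$ in place of $<$ in \eqref{eq:linProg1.2} (yielding $\lambda \le \bar\lambda$, which is what Problem~\ref{prb:medical} requires). Beyond this bookkeeping, no computation is needed and the theorem follows immediately.
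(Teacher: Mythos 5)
Your proposal is correct and follows the same two-step route as the paper's proof: invoke Proposition~\ref{prop:med} for the sufficiency of \eqref{eq:linProg1}, and check that every constraint in \eqref{eq:GP:med} is a posynomial constraint. The paper merely asserts both steps as ``easy to check,'' whereas you carry out the componentwise division by monomials explicitly and flag the strict-versus-nonstrict inequality bookkeeping; this is a more complete write-up of the identical argument.
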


\begin{proof}
It is easy to check that all the constraints in \eqref{eq:GP:med} are posynomial
constraints, provided $f_i$ and $g_i$ are posynomials for every $i$. Moreover,
from Proposition~\ref{prop:med}, we can easily see that the rates $\beta_i$ and
$\delta_i$ satisfying the constraints \eqref{eq:GP:med} solve
Problem~\ref{prb:medical}.
\end{proof}

\section{Resource Allocation with Isolation} \label{sec:isol}

This section presents a solution to Problem~\ref{prb:isolation}. As in the
previous section, we show that Problem~\ref{prb:isolation} is feasible if a set
of posynomial constraints is feasible, under the general assumption that the
removal times~$Y_i$ can be described by a general class of distributions called
phase-type distributions~\cite{Asmussen1996}.

We start our presentation by reviewing phase-type
distributions~\cite{Asmussen1996}. Consider a time-homogeneous Markov process in
continuous-time with $p+1$ ($p\geq 1$) states such that states~$1$, $\dotsc$,
$p$ are transient and state $p+1$ is absorbing. The infinitesimal generator of
the process, that is, the matrix of transition rates, is then necessarily of the
form
\begin{equation*}
\begin{bmatrix}
\Pi & w\\
0 & 0
\end{bmatrix}, \ w = -\Pi \onev_p,
\end{equation*}
where $\Pi \in \mathbb{R}^{p\times p}$ is an invertible Metzler matrix with
non-positive row-sums. Let $\smallbmatrix{\phi\\0} \in \mathbb{R}^{p+1}$ ($\phi
\in \mathbb{R}^p$) denote the initial distribution of the
Markov process. Then, the time to absorption into the state~$p+1$, denoted by
$(\phi, \Pi)$, is called a \emph{phase-type distribution}. It is known that the
set of phase-type distributions is dense in the set of positive valued
distributions~\cite{Asmussen1996}, even when $\phi = u_1$. Moreover, there
exists an efficient fitting algorithm to approximate a given arbitrary
distribution by a phase-type distribution \cite{Asmussen1996}.

We can now state our assumption on the distribution of the removal times~$Y_i$:

\begin{assumption}
There exists $p\geq 0$ such that, for each $i\in [n]$, $Y_i$ follows the
phase-type distribution $(u_1, \Pi_i)$, with $\Pi_i \in \mathbb{R}^{p\times p}$
being parametrized by positive variables~$\gamma_i = (\gamma_{i1}, \dotsc,
\gamma_{iq_i})$.
\end{assumption}

The rest of this section is devoted to giving a solution to
Problem~\ref{prb:isolation}. We start with showing that the overall removal
time~$Z_i$ given in \eqref{eq:YwithIsolation} follows a phase-type distribution:

\begin{lemma}\label{lem:PH}
Define $\Pi'_i = \Pi_i-\delta_i \Id_p$. Then $Z_i$ follows the phase-type
distribution $(u_1, \Pi'_i)$.
\end{lemma}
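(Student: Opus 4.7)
The cleanest route is to identify phase-type distributions by their survival (complementary cumulative distribution) function and use the well-known formula that for a phase-type random variable with representation $(\phi, \Pi)$, the survival function at time $t$ equals $\phi^\top \exp(\Pi t) \onev_p$. The plan is to write the survival function of $Z_i$ as a product (using independence of $X_i$ and $Y_i$), recombine the two factors into a single matrix exponential, and read off the resulting phase-type representation.

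Concretely, first I would recall that, since $X_i$ and $Y_i$ are independent,
\begin{equation*}
\Pr(Z_i > t) = \Pr(X_i > t)\,\Pr(Y_i > t).
\end{equation*}
The first factor equals $e^{-\delta_i t}$ because $X_i$ is exponential with rate $\delta_i$. For the second factor, the phase-type hypothesis $Y_i \sim (u_1, \Pi_i)$ gives $\Pr(Y_i > t) = u_1^\top \exp(\Pi_i t)\onev_p$, since this is precisely the probability that the underlying absorbing Markov chain, started in state $1$, has not yet reached the absorbing state $p+1$ by time $t$.

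Next I would combine the two factors. Because $\delta_i \Id_p$ is a scalar multiple of the identity, it commutes with $\Pi_i$, so
\begin{equation*}
e^{-\delta_i t}\,\exp(\Pi_i t) = \exp\bigl((\Pi_i - \delta_i \Id_p)t\bigr) = \exp(\Pi'_i t).
\end{equation*}
Therefore $\Pr(Z_i > t) = u_1^\top \exp(\Pi'_i t)\onev_p$, which is exactly the survival function of the phase-type distribution $(u_1, \Pi'_i)$. To close the argument I would briefly verify that $\Pi'_i$ is a valid phase-type generator: it inherits from $\Pi_i$ the Metzler property (subtracting a positive multiple of $\Id_p$ only decreases the diagonal), its row-sums satisfy $\Pi'_i \onev_p = \Pi_i \onev_p - \delta_i \onev_p \leq 0$, and it is invertible since its eigenvalues are those of $\Pi_i$ shifted by $-\delta_i$ and hence still have strictly negative real parts. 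Equality of survival functions then yields equality in distribution.

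The only mild obstacle is justifying the matrix-exponential formula for $\Pr(Y_i > t)$, but this is a standard fact about the absorbing Markov process defining a phase-type distribution and can be cited directly from \cite{Asmussen1996}. An equivalent, slightly more conceptual alternative would be to construct the joint Markov chain explicitly: superimpose on the $Y_i$-chain an independent exponential killing clock of rate $\delta_i$, which augments the generator on the transient states by $-\delta_i \Id_p$ and shifts the absorption rates by $+\delta_i \onev_p$; the first absorption time of this chain is $\min(X_i, Y_i) = Z_i$ by construction, and its representation is $(u_1, \Pi'_i)$. Either route gives the lemma.
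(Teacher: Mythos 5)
Your proof is correct and follows essentially the same route as the paper's: factor the distribution of $\min(X_i,Y_i)$ using independence, express $\Pr(Y_i>t)$ via the matrix exponential, and absorb the scalar factor $e^{-\delta_i t}$ into the exponent using commutativity with $\Id_p$. Your added checks (that $\Pi'_i$ remains a valid invertible sub-generator) are a welcome bit of extra rigor the paper omits, but the argument is the same.
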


\begin{proof}
We first recall that the cumulative distribution function of the phase-type
distribution $(\phi, \Pi)$ equals $F(t) = 1- \phi \exp(t\Pi)\onev_p$. We also
recall that, given independent random variables $X_1$ and $X_2$ having the
cumulative distribution functions $F_1$ and $F_2$, the cumulative distribution
function of the random variable $\min(X_1, X_2)$ equals
$1-(1-F_1(t))(1-F_2(t))$. From the above facts, the cumulative distribution
function of~$Z_i$ equals
\begin{equation*}
\begin{aligned}
F(t) 
&= 
1-\bigl(1 - (1-e^{-\delta_i t})
\bigr)\bigl(1 - (1 -u_1\exp(t\Pi_i)\onev_p )\bigr)
\\
&=
1- e^{-\delta_i t} u_1\exp(t\Pi_i)\onev_p
\\
&=
1 - u_1 \exp\bigl(t(\Pi_i-\delta_i \Id_p)\bigr) \onev_p, 
\end{aligned}
\end{equation*}
which coincides with the cumulative distribution function of~$(u_1, \Pi'_i)$, as
desired.
\end{proof}

From Lemma~\ref{lem:PH} we see that, in our SIR model with isolations, the
infection rates are constants and the overall removal times~$Z_i$ follow
phase-type distributions. Therefore, this SIR model has a similar structure with
the SIS model studied in~\cite{Ogura2015b}, where it is assumed that the
infection rates are constants while the recovery (i.e., the transition from the
infected state to the susceptible state) occurs following phase-type
distributions. Hence, following the same argument as in~\cite{Ogura2015b}, we
can describe our SIR model with isolations using stochastic differential
equations:

\begin{proposition}\label{prop:SDEforSIRisolation}
Define $w'_i = [w'_{i1}\ \cdots \ w'_{ip}]^\top = -\Pi'_i \onev_p$. Let the
$\{0, 1\}$-valued stochastic processes $S_i, R_i$ and the $\{0, 1\}^p$\nobreakdash-valued
stochastic processes $\tilde I_{i} = [\tilde I_{i,1}\ \cdots\ \tilde
I_{i,p}]^\top$ ($i=1, \dotsc, n$) follow the stochastic differential equations:
\begin{align}
dS_i 
&=
-S_i\sum_{j=1}^n a_{ij} \onev_p^\top \tilde I_j\, dN_{\beta_i},  \notag
\\
d\tilde I_i
&= 
\sum_{\ell, m=1}^p(U_{m\ell}-U_{\ell\ell}) \tilde I_{i}\,dN_{\Pi'_{i,\ell m}}
- 
\sum_{\ell=1}^p U_{\ell \ell}\tilde I_{\ell}\,dN_{w'_{i\ell}}\notag
\\
&\hspace{3cm}+ 
u_1S_i\sum_{j=1}^n a_{ij} \onevtop \tilde I_j\, dN_{\beta_i}, 
\label{eq:isol:dI_i}
\\
dR_i &= \sum_{\ell=1}^p\tilde  I_{i,\ell} \,dN_{w'_{i\ell}}, \label{eq:isol:dR_i}
\end{align}
with the initial conditions
\begin{equation*}
\begin{multlined}[.85\linewidth]
\bigl(S_i(0), \tilde I_i(0), R_i(0) \bigr)  \\= 
\begin{cases}
(0, u_1, 0),&\text{if $i$ is infected at time $0$},
\\
(1, 0, 0),&\text{otherwise}. 
\end{cases}
\end{multlined}
\end{equation*}
Define 
\begin{equation*}
I_i(t) = \onev_p\tilde I_i(t).
\end{equation*}
Then, in the SIR model with isolations, a node $i$ is susceptible, infected, or
removed at time $t$ if and only if $S_i(t)=1$, $I_i(t)=1$, or $R_i(t)=1$,
respectively.
\end{proposition}

\begin{proof}
We refer the readers to the proof of \cite[Proposition~3.1]{Ogura2015b}. The
details are omitted.
\end{proof}

Based on Proposition~\ref{prop:SDEforSIRisolation}, we can prove the following
proposition:

\begin{proposition} \label{prop:isol}
Let $\bar \lambda >0$ be given.  Then, we have $\lambda < \bar\lambda$ if there
exists a positive vector $v \in \mathbb{R}^{np}$ satisfying the following
inequalities:
\begin{gather}
v^\top \biggl( \bigoplus_{i=1}^n (\Pi'_i)^\top + (JBA)\otimes (u_1\onev_p^\top)\biggr) + \onev_n^\top \bigoplus_{i=1}^n (w_i')^\top  < 0, \label{eq:isol:inequ}
\\
v^\top \tilde I(0)  < \bar\lambda + \sigma_I(0). \label{eq:isol:ineq2}
\end{gather}
\afterequation
\end{proposition}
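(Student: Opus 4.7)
The plan is to mirror the proof of Proposition~\ref{prop:med}, but on the expanded state vector $\tilde I = \col(\tilde I_1,\dotsc,\tilde I_n) \in \mathbb{R}^{np}$ produced by the phase-type embedding in Proposition~\ref{prop:SDEforSIRisolation}. I would first take expectations of the stochastic differential equations \eqref{eq:isol:dI_i} and \eqref{eq:isol:dR_i}, using that the Poisson counters are independent of the state. The combinatorial sums arising from the first two groups of terms in \eqref{eq:isol:dI_i}, namely $\sum_{\ell,m}\Pi'_{i,\ell m}(U_{m\ell}-U_{\ell\ell})$ and $-\sum_\ell w'_{i\ell}U_{\ell\ell}$, should be collapsed by expanding $U_{m\ell}=u_mu_\ell^\top$, collecting the diagonal contributions, and invoking the row-sum identity $\Pi'_i\onev_p = -w'_i$; the diagonal pieces cancel and what survives is exactly $(\Pi'_i)^\top$. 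This yields the exact differential equations
\begin{align*}
\frac{d}{dt}E[\tilde I_i] &= (\Pi'_i)^\top E[\tilde I_i] + u_1\beta_i\sum_{j=1}^n a_{ij}\,E[S_i\,\onev_p^\top\tilde I_j], \\
\frac{d}{dt}E[R_i] &= (w'_i)^\top E[\tilde I_i].
\end{align*}

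Next, I would use $S_i(t)\le S_i(0)=J_{ii}$ almost surely and nonnegativity of the remaining factors, and stack the equations across $i\in[n]$, to obtain a linear differential inequality $\frac{d}{dt}E[\tilde I(t)] \le N\,E[\tilde I(t)]$, where
\begin{equation*}
N \;=\; \bigoplus_{i=1}^n (\Pi'_i)^\top + (JBA)\otimes (u_1\onev_p^\top).
\end{equation*}
The matrix $N$ is Metzler: its block-diagonal part has Metzler blocks since $(\Pi'_i)^\top$ is Metzler, and the Kronecker part is entry-wise nonnegative. Consequently the comparison principle~\cite{Kirkilionis2004} gives $E[\tilde I(t)] \le \exp(Nt)\,\tilde I(0)$, exactly as in the proof of Proposition~\ref{prop:med}.

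Assumption \eqref{eq:isol:inequ} together with Lemma~\ref{lem:fundamental}, applied with $F=N$ and $H=\bigoplus_{i=1}^n (w'_i)^\top$, implies that $N$ is Hurwitz stable, so that $\int_0^\infty E[\tilde I(t)]\,dt \le -N^{-1}\tilde I(0)$. Integrating the equation for $E[R]$ and using $\lambda = \onev_n^\top \lim_{t\to\infty}E[R(t)] - \sigma_I(0)$ then produces
\begin{equation*}
\lambda \;\le\; -\onev_n^\top\!\Bigl(\bigoplus_{i=1}^n (w'_i)^\top\Bigr) N^{-1}\tilde I(0) - \sigma_I(0).
\end{equation*}
A second application of Lemma~\ref{lem:fundamental}, now with $G=\tilde I(0)$, $s=1$, and threshold $\bar\lambda+\sigma_I(0)$, converts this bound into the desired implication: the existence of a positive $v\in\mathbb{R}^{np}$ satisfying \eqref{eq:isol:inequ}--\eqref{eq:isol:ineq2} guarantees $\lambda<\bar\lambda$.

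The hard part will be the algebraic collapse that produces the \emph{transposed} generator $(\Pi'_i)^\top$, rather than $\Pi'_i$, as the coefficient of $E[\tilde I_i]$ in the linearization. The two Poisson-counter sums in \eqref{eq:isol:dI_i} mix the rates $\Pi'_{i,\ell m}$ with the rank-one operators $U_{m\ell}=u_mu_\ell^\top$, and only after a careful index swap and cancellation of the diagonal pieces against $-\sum_\ell w'_{i\ell}U_{\ell\ell}$ does the correct form $(\Pi'_i)^\top$ emerge; getting this index bookkeeping right is essential for the coefficient matrix $N$ in \eqref{eq:isol:inequ} to match the true dynamics of the SIR model with isolation. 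The remaining assembly of the direct-sum/Kronecker form at the $n$-node level and the two invocations of Lemma~\ref{lem:fundamental} are routine by comparison with the proof of Proposition~\ref{prop:med}.
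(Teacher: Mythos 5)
Your proposal follows essentially the same route as the paper's proof: derive $(d/dt)E[R(t)] = \bigl(\bigoplus_{i=1}^n (w'_i)^\top\bigr)E[\tilde I(t)]$, bound $(d/dt)E[\tilde I(t)]$ by $\mathcal A\,E[\tilde I(t)]$ with $\mathcal A = \bigoplus_{i=1}^n (\Pi'_i)^\top + (JBA)\otimes(u_1\onev_p^\top)$ using $S_i(t)\le S_i(0)$, integrate via the comparison principle and Hurwitz stability, and close with Lemma~\ref{lem:fundamental} (the paper delegates the algebraic collapse to the cited reference and to the proof of Proposition~\ref{prop:med}, whereas you spell it out, correctly). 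The only cosmetic quibble is that your two ``applications'' of Lemma~\ref{lem:fundamental} are really a single invocation of its two-inequality equivalence with $F=\mathcal A$, $H=\bigoplus_{i=1}^n (w'_i)^\top$, and $G=\tilde I(0)$.
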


\begin{proof}
Define $\tilde I = [\tilde I_1^{\,\top} \ \cdots \tilde I_n^{\,\top}]^\top$.
From \eqref{eq:isol:dR_i} it follows that 
\begin{equation*}
(d/dt)E[R_{i}(t)] 
= \sum_{\ell=1}^{p}
E[\tilde I_{i, \ell}(t)]w'_{i\ell} 
= (w'_i)^\top E[\tilde I_i(t)]
\end{equation*}
and therefore
$(d/dt)E[R(t)] = (\bigoplus_{i=1}^n (w'_i)^\top) E[\tilde I(t)].$ This equation
shows that
\begin{equation}\label{eq:limE[R(t)]=...}
\lim_{t\to\infty}E[R(t)] = \biggl(\bigoplus_{i=1}^n (w'_i)^\top \biggr) \int_0^\infty E[\tilde I(t)]\,dt. 
\end{equation}
On the other hand, in the same way as in~\cite{Ogura2015b}, from
\eqref{eq:isol:dI_i} we observe $({d}/{dt})E[\tilde I(t)] \leq \mathcal A
E[\tilde I(t)]$, where
\begin{equation*}
\mathcal A 
= 
\bigoplus_{i=1}^n (\Pi'_i)^\top + (JBA)\otimes (u_1\onev_p^\top).
\end{equation*}
Therefore, 
\begin{equation}\label{eq:E[I(t)]leq...isol}
E[\tilde I(t)] \leq \exp(\mathcal A t) \tilde I(0). 
\end{equation}
Notice that, by \eqref{eq:isol:inequ} and Lemma~\ref{lem:fundamental}, the
matrix $\mathcal A$ is Hurwitz stable. Therefore, in the same way as in the
proof of Proposition~\ref{prop:med}, from \eqref{eq:limE[R(t)]=...} and
\eqref{eq:E[I(t)]leq...isol} it follows that the condition stated in the theorem
is sufficient for $\lambda \leq \bar\lambda$. The details are omitted.
\end{proof}

Based on Proposition~\ref{prop:isol}, we can derive the following theorem, which enables us
to solve Problem~\ref{prb:isolation} by checking the feasibility of posynomial constraints:

\begin{theorem}
Let $\mathcal D\Pi_i$ and $\mathcal O \Pi_i$ denote the diagonal and
off-diagonal parts of $\Pi_i$, respectively. For all $i \in \{1, \dotsc, n\}$,
assume that
\begin{enumerate}
\item the entries of $w_i$ and $\mathcal O \Pi_i$ are posynomials in~$\gamma_i$;

\item the entries of $-\mathcal D\Pi_i$ are monomials in~$\gamma_i$; 

\item the cost functions $g_i$ and $h_i$ are posynomials in $\beta_i$ and~$\gamma_i$, respectively.
\end{enumerate}
Moreover, for all $i \in [n]$ and $\ell \in [p]$, let $\kappa_{i\ell}$ and
$\alpha_{i\ell}$ be positive constants such that
\begin{equation}\label{eq:mono<linear}
\kappa_{i\ell} (-\Pi_{i,\ell\ell})^{\alpha_{i\ell}} \leq (-\Pi_{i,\ell\ell}) + \delta_i
\end{equation}
for all possible values of $\gamma_i$. Then, the parameters $\beta_1$, $\dotsc$,
$\beta_n$, $\gamma_1$, $\dotsc$, $\gamma_n$ solve Problem~\ref{prb:isolation} if
there exists a positive vector~$v\in \mathbb{R}^{np}$ satisfying the posynomial
constraints:
\begin{subequations}\label{eq:GP:isoL}
\begin{gather}
\begin{aligned}
&v^\top \biggl(
\bigoplus_{i=1}^n (\mathcal O \Pi_i)^\top +  (JBA)\otimes (u_1 \onev_p^\top)\biggr)
+\onev_p^\top \bigoplus_{i=1}^n w_i^\top +\\
&\hspace{1.9cm}\onev_n^\top (D\otimes \onev_p)< v^\top \bigoplus_{i=1}^n\bigoplus_{\ell=1}^p \kappa_{i\ell}(-\Pi_{i,\ell\ell})^{\alpha_{i\ell}},
\end{aligned}\label{eq:GP:isol:main}
\\
\eqref{eq:cost:isol},\ 
\eqref{eq:isol:ineq2},\ 
\ubar{\beta}_i \leq \beta_i \leq \bar \beta_1,\  
\gamma_i \in \prod_{k=1}^{q_i} [\ubar{\gamma}_{ik},
\bar{\gamma}_{ik}].
\label{eq:GP:isol:constraints}
\end{gather}
\end{subequations}
\end{theorem}

\begin{proof}
It is straightforward to see that the conditions in~\eqref{eq:GP:isoL}
are all posynomial constraints with respect to variables~$v$,
$\beta_i$, and~$\gamma_i$ ($i\in [n]$) under the assumptions stated in
the theorem. Assume that $\beta_1$, $\dotsc$, $\beta_n$, $\gamma_1$,
$\dotsc$, $\gamma_n$ satisfy the constraints~\eqref{eq:GP:isol:main}
and \eqref{eq:GP:isol:constraints}. Then, by the conditions on the
constants $\kappa_\ell$ and $\alpha_\ell$, we have
$\bigoplus_{i=1}^n\bigoplus_{\ell=1}^p
\kappa_{i\ell}(-\Pi_{i,\ell\ell})^{\alpha_{i\ell}} \leq \delta
\Id_{np} + \bigoplus_{i=1}^n(-\mathcal D \Pi_i)$. Substituting this
inequality to \eqref{eq:GP:isol:main} indeed yields
\eqref{eq:isol:inequ}. This observation and the
constraints~\eqref{eq:GP:isol:constraints} guarantee $\lambda \leq
\bar\lambda$ by Proposition~\ref{prop:isol}.
\end{proof}

\begin{remark}
We can use a bisection search to find the pair $(\kappa_{i\ell},
\alpha_{i\ell})$ that satisfy \eqref{eq:mono<linear} and, moreover, minimize the
maximum difference between the left- and right-hand sides of
\eqref{eq:mono<linear}. The details are omitted due to limitations of space.
\end{remark}

\section{Numerical Examples}\label{sec:num}

We present numerical examples in this section, for both the cases without and
with isolations (i.e., quarantine). We let $\mathcal G$ be the graph of a part of a social network
of $n = 68$ nodes. The adjacency matrix of the graph has the spectral radius
$\rho = 10.61$. Also, we randomly choose and fix four initially infected nodes
from the graph.

We first consider the case without isolations. For each $i\in [n]$,
let $\ubar \beta_i = \ubar \beta = 0.00266$, $\bar \beta_i = \bar
\beta = 0.0133$, $\ubar \delta_i = \ubar \delta = 0.05$, and $\bar
\delta_i = \bar \delta = 0.1$. We use the cost functions $f_i(\beta_i)
= c_1 \beta_i^{-1} + c_2$ and $g_i(\delta_i) = c_3 \delta_i + c_4$,
where the real constants $c_1>0$, $c_2$, $c_3>0$, and $c_4$ are chosen
in such a way that $f_i(\ubar \delta) = 1$, $f_i(\bar \delta) = 0$,
$g_i(\ubar \beta) = 0$, and $g_i(\bar \beta) = 0$. We let $\bar C = n
= 68$. To find sub-optimal resource allocations, we minimize
$\bar\lambda$ subject to the constraints in \eqref{eq:GP:med}. This
optimization problem is a geometric program because $\bar \lambda$ is
trivially a monomial. The scatter plot of the costs from the obtained
recovery rates and infection rates is shown in
Fig.~\ref{fig:cost-cost} (red plots). We can observe an interesting
difference of the obtained cost allocation from the one based on the
SIS model~\cite{Preciado2014} (blue plots).  In
Fig.~\ref{fig:degree-cost}, we show the scatter plot of the
sub-optimal investments on each node versus degrees of the nodes.
Using Monte Carlo simulation, we find that the proposed allocation
achieves $\lambda = 2.57$, which is 40\% less than $\lambda = 4.38$
obtained from the allocation~\cite{Preciado2014} optimized for the SIS
model.

\begin{figure}[tb]
\vspace{.2cm}
\centering
\includegraphics[trim=2cm 0 2cm 0,width=\mywidth]{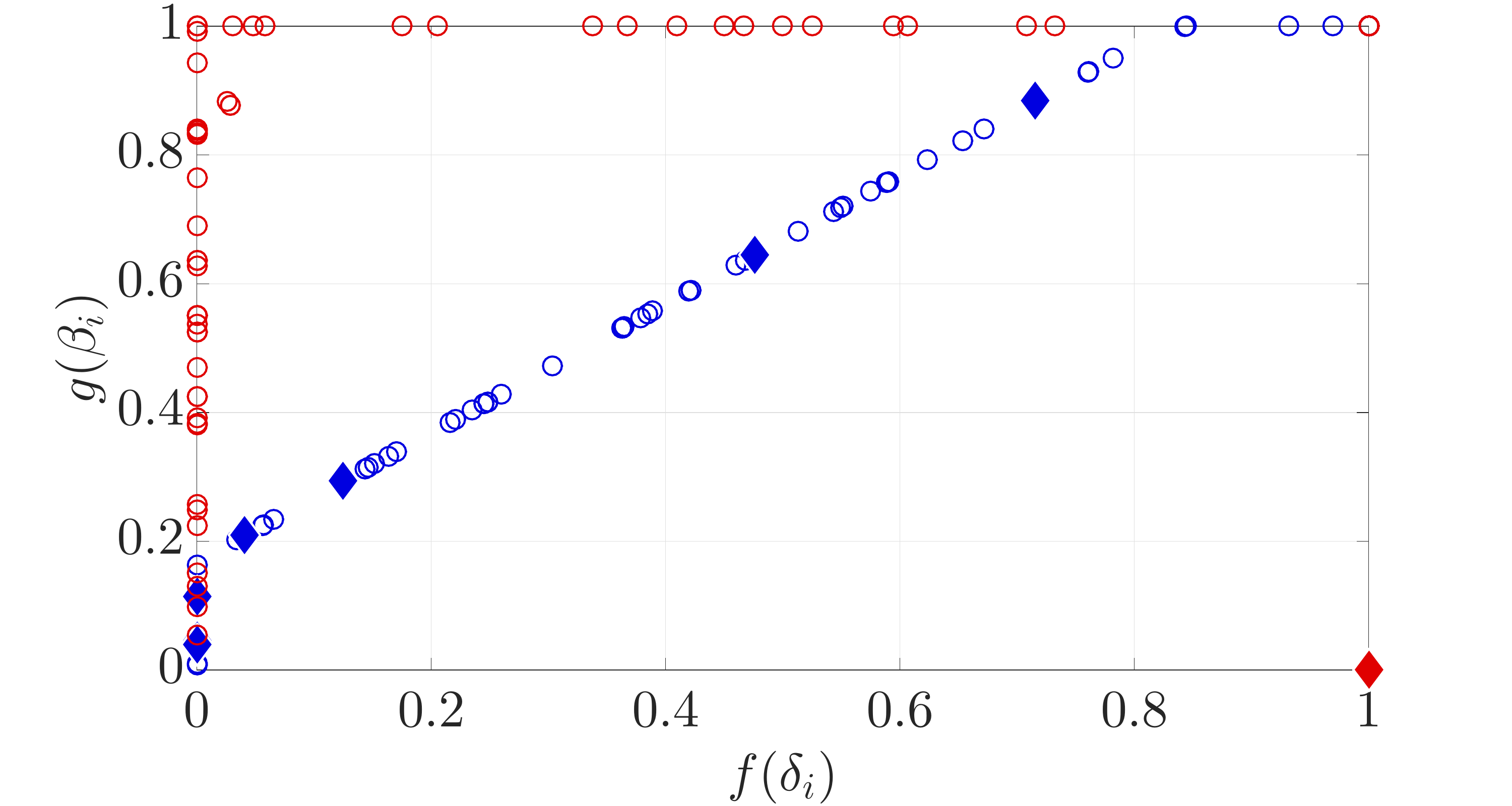}
\caption{Correction versus prevention per node. Red:  Proposed. Blue: The optimal allocation
for the SIS model~\cite{Preciado2014}.  Diamond markers correspond to the initially infected nodes.}
\label{fig:cost-cost}
\vspace{.5cm}
\centering
\includegraphics[trim=2cm 0 2cm 0,width=\mywidth]{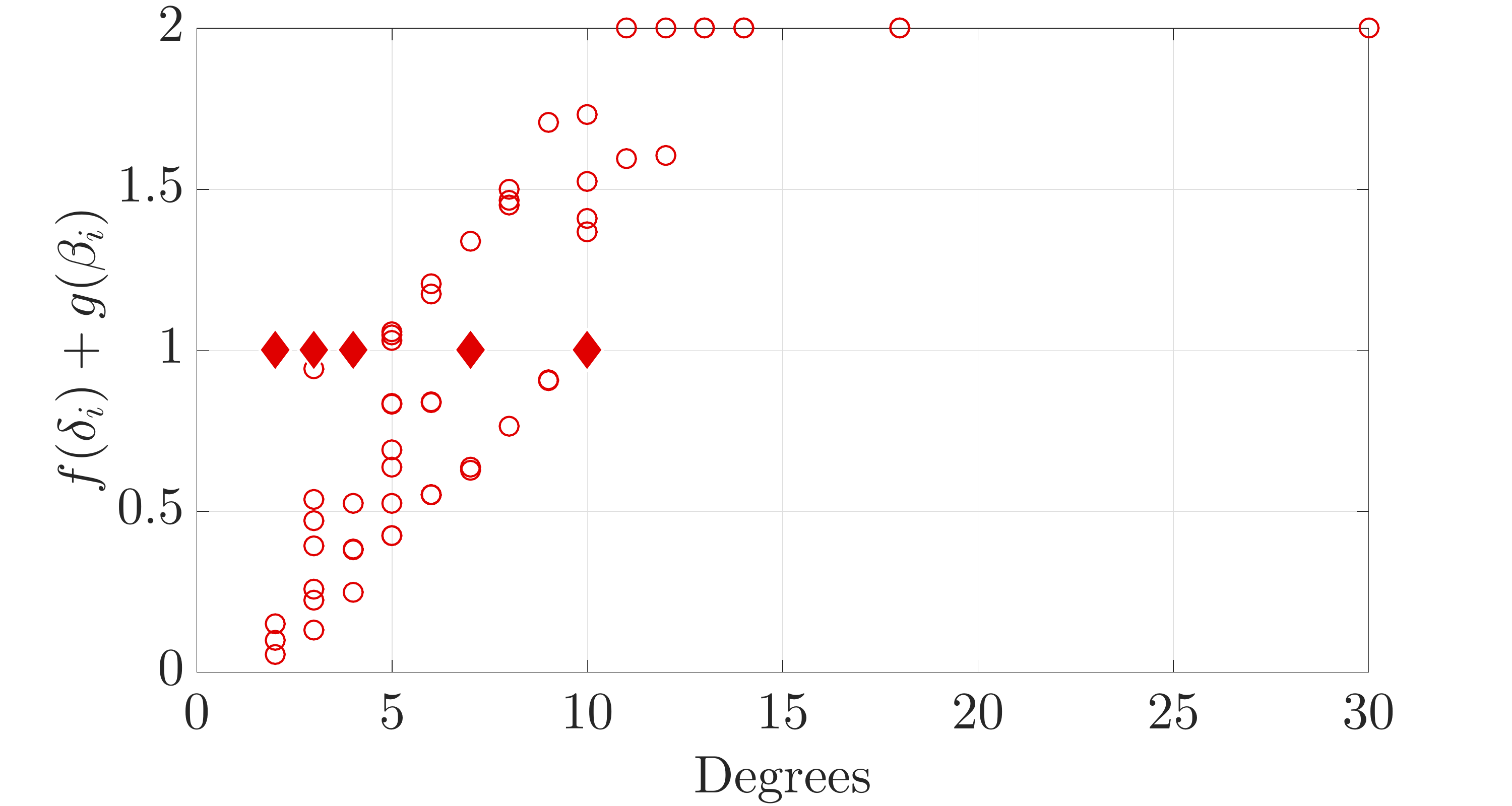}
\caption{Sub-optimal investments per node versus degrees. Diamond markers correspond to the initially infected nodes}
\label{fig:degree-cost}
\end{figure}

We then consider the case with isolations. We model the isolation times~$Y_i$ by
Erlang distributions with mean $\gamma_i > 0$ and shape $p$. We let $\gamma_i$
be the design variable. Since an Erlang distribution is a $p$-sum of independent
and identically distributed exponential distributions, $Y_i$ approximates a
normal distribution when $p$ is large. Moreover, $Y_i$ is the phase type
distribution $(e_1, \Pi_i)$ with
\begin{equation*}
\Pi_i = \begin{bmatrix}
	-p/\gamma_i & p/\gamma_i  &        &  O\\
	          & \ddots & \ddots &  \\
	          &           & \ddots & p/\gamma_i  \\
{O}	          &           &        & -p/\gamma_i
\end{bmatrix} \in \mathbb{R}^{p\times p}. 
\end{equation*}
Since
\begin{equation*}
w_i = \begin{bmatrix}
O_{p-1,1}\\p/\gamma_i
\end{bmatrix}, \ 
\mathcal O\Pi_i=
\begin{bmatrix}
O_{p-1,1}&(p/\gamma_i)\Id_{p-1}
\\
0 & O_{1, p-1}
\end{bmatrix},
\end{equation*}
and $-\mathcal D \Pi_i = (p/\gamma_i)\Id_{p}$, all the assumptions in
Theorem~\ref{thm:optimal} are satisfied. We choose the cost function for
$\gamma_i$ as $h_i(\gamma_i) = c_5 /\gamma_i + c_6$, where $c_5>0$ and $c_6$ are
constants such that $h(\ubar{\gamma}_i) = 1$ and $h(\bar{\gamma}_i) = 0$. This
choice is based on an assumption that we have to pay the more cost to  achieve
the faster response to patients. We  fix $\delta_i = 0.1$ and $\bar C = n$. To
find sub-optimal resource allocations, we minimize $\bar\lambda$ subject to the
constraints in \eqref{eq:GP:isoL}. We show the scatter plot of the sub-optimal
resource allocation in Fig.~\ref{fig:cost-cost:isolation}, where we can observe
a similar pattern as Fig.~\ref{fig:cost-cost}. 

\begin{figure}[tb]
\vspace{2mm}
\centering
\includegraphics[trim=2cm 0 2cm 0,width=\mywidth]{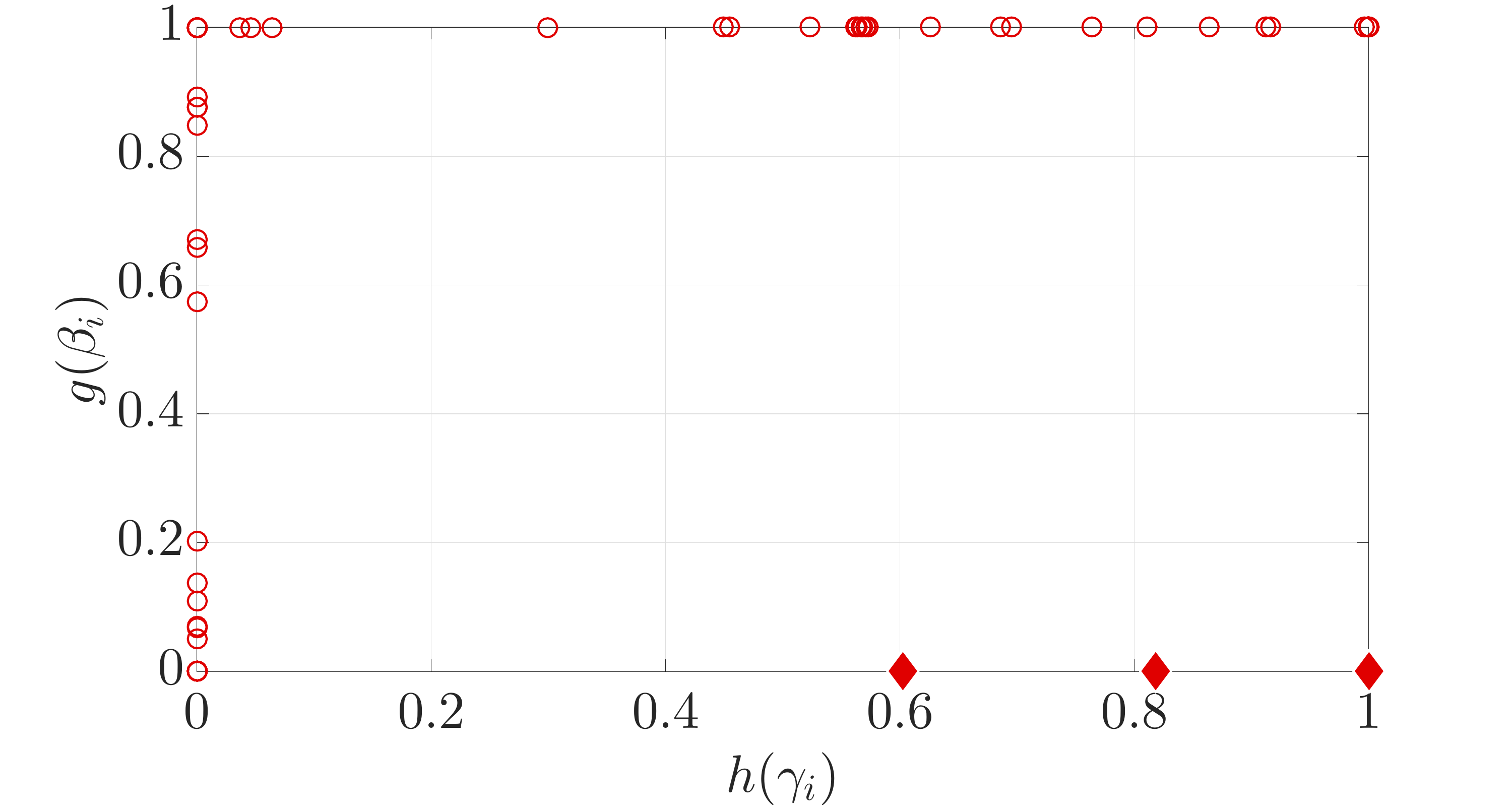} \caption{Isolation versus prevention per node. Diamond markers correspond to the initially infected nodes.} \label{fig:cost-cost:isolation}
\vspace{-3mm}
\end{figure}

\section{Conclusion}

In this paper, we have proposed a convex optimization framework to contain an epidemic outbreak in the networked SIR models. We have developed a framework to find a sub-optimal resource allocation to contain the accumulated number of
infections over time. We have then extended our results to a networked SIR model allowing isolations (quarantines), where infected nodes can be removed from the population.
We have then illustrated the efficiency of our framework via numerical simulations in a real social network.




\end{document}